 \newtheorem{thm}{Theorem}[section]
  \newtheorem*{thm*}{Theorem}
 \theoremstyle{definition}
 \numberwithin{equation}{section}
\newcommand{\caA}{{\mathcal A}}
\newcommand{\caF}{{\mathcal F}}
\newcommand{\caH}{{\mathcal H}}
\newcommand{\caO}{{\mathcal O}}
\newcommand{\bbC}{{\mathbb C}}
\newcommand{\bbN}{{\mathbb N}}
\newcommand{\bbR}{{\mathbb R}}
\newcommand{\bbZ}{{\mathbb Z}}
\newcommand{\iu}{\mathrm{i}}
\newcommand{\str}{^{*}}
\newcommand{\ep}[1]{\mathrm{e}^{#1}}
\newcommand{\dd}{\,\mathrm{d}}
\newcommand{\tr}{\mathrm{tr}}
\newcommand{\be}{\begin{equation}}
\newcommand{\ee}{\end{equation}}
\newcommand{\bea}{\begin{eqnarray}}
\newcommand{\eea}{\end{eqnarray}}
\newcommand{\beann}{\begin{eqnarray*}}
\newcommand{\eeann}{\end{eqnarray*}}
\newcommand{\dist}{\mathrm{dist}}
\newcommand{\eqL}{\stackrel{\scriptscriptstyle L}{=}}
\newcommand{\ring}{\Pi}
\title{On the absence of stationary currents}
\author{Sven Bachmann}
\address{Department of Mathematics \\ The University of British Columbia \\ Vancouver, BC V6T 1Z2 \\ Canada}
\email{sbach@math.ubc.ca}
\author{Martin Fraas}
\address{Department of Mathematics \\ Virginia Tech \\ Blacksburg, VA 24061-0123 \\ USA}
\email{fraas@vt.edu}
\date{\today}
\begin{document}

\begin{abstract}
We review proofs of a theorem of Bloch on the absence of macroscopic stationary currents in quantum systems. The standard proof shows that the current in 1D vanishes in the large volume limit under rather general conditions. In higher dimension, the total current across a cross-section does not need to vanish in gapless systems but it does vanish in gapped systems. We focus on the latter claim and give a self-contained proof motivated by a recently introduced index for many-body charge transport in quantum lattice systems having a conserved $U(1)$-charge.
\end{abstract}

\maketitle
 

Although he never published the result himself, Bloch proved in the early thirties that the state of lowest free energy cannot carry a net mean current. The first published proof thereof is due to Bohm~\cite{Bohm}, who further argues for the vanishing of the current in a ring of large radius. More recently, Ohashi and Momoi revisited the proof in a second quantized setting~\cite{Ohashi}, while Yamamoto~\cite{yamamoto} and Watanabe~\cite{watanabeBloch} insisted on the role of gauge invariance and the relation to the Lieb-Schultz-Mattis theorem~\cite{LSM2}. Finally, the many-body index introduced in~\cite{MBIndex} provides an alternative point of view on the proof of Bloch's theorem, which extends to higher dimensional settings but requires a gap.

This note focuses on interacting lattice systems having a $U(1)$-conserved charge, defined on a ring or a higher dimensional torus. While widely known, the theorem has a somewhat vague status in the mathematics literature: the goal of this note is to clarify both the statements and the assumptions. In particular, we note that time reversal symmetry is not required for the validity of the statement. 

While we shall give precise versions of the theorem later, we immediately provide an informal statement.
\begin{thm*}
For a gapped system, the ground state expectation value of the total current flowing through a hypersurface vanishes. In 1D, the same holds without the gap assumption.
\end{thm*}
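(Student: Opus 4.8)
The plan is to recast the current as a derivative of the ground-state energy with respect to a flux threaded through the torus $\Lambda=(\bbZ/L\bbZ)^{d}$. Fixing the direction $e_{1}$ transverse to the hypersurface $\Sigma$, I would let $H(\phi)$ be obtained from $H=H(0)$ either by attaching the phase $\ep{\iu\phi/L}$ to every $e_{1}$-bond (``uniform gauge'', analytic in $\phi$) or by attaching $\ep{\iu\phi}$ to the bonds crossing a single cross-section (``localized gauge''); the two are conjugate by the twist $U_{\phi}=\ep{\iu(\phi/L)\sum_{x}x_{1}n_{x}}$ and share the ground-state energy $E_{0}(\phi)$. Feynman--Hellmann in the localized gauge then identifies the ground-state expectation of the total current through $\Sigma$ with $-\partial_{\phi}E_{0}(\phi)\big|_{\phi=0}$. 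Two facts are recorded at the outset: integrality of the $U(1)$ charge makes $U_{2\pi}$ single-valued and yields $H(\phi-2\pi)=U_{2\pi}H(\phi)U_{2\pi}^{*}$, so $E_{0}$ is $2\pi$-periodic; and in any eigenstate the current is the same through every cross-section transverse to $e_{1}$, because $\iu[H,Q_{A}]$ for the charge $Q_{A}$ of a cylinder $A$ is the difference of the currents through its two faces and has vanishing expectation.

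For the one-dimensional claim I would use only the crude variational bound $E_{0}(\phi)\le\braket{\psi_{0}}{H(\phi)\psi_{0}}$ with the trial vector $\psi_{0}=\psi_{0}(0)$ in the analytic uniform gauge. Expanding, $\braket{\psi_{0}}{H(\phi)\psi_{0}}=E_{0}(0)+\phi\,\partial_{\phi}E_{0}(0)+\tfrac12\phi^{2}\braket{\psi_{0}}{H''(0)\psi_{0}}+\cdots$, where the first-order coefficient is $\partial_{\phi}E_{0}(0)$ by Feynman--Hellmann. The essential observation is that every $\phi$-derivative of the uniform-gauge $H$ is an extensive sum of local terms carrying a prefactor $L^{-2}$, so $\braket{\psi_{0}}{H''(0)\psi_{0}}=\caO(L^{d-2})$ uniformly. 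Evaluating the bound at $\phi=\pm2\pi$ and invoking periodicity $E_{0}(\pm2\pi)=E_{0}(0)$ gives the two inequalities $\mp2\pi\,\partial_{\phi}E_{0}(0)\le\caO(L^{d-2})$, hence $|\partial_{\phi}E_{0}(0)|=\caO(L^{d-2})$. In $d=1$ this is $\caO(L^{-1})\to0$ with no spectral assumption, while the same estimate is manifestly useless for $d\ge2$: the flux-curvature of $E_{0}$ is then no longer small, in accordance with the possibility of genuine currents in gapless higher-dimensional systems.

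For the gapped statement in arbitrary dimension the plan is to replace the periodicity-plus-extensivity bookkeeping by a locality argument powered by the gap. Assuming a uniform gap along the localized-gauge path $\phi\in[0,2\pi]$, quasi-adiabatic continuation furnishes a unitary $V(\phi)$, localized near $\Sigma$ with faster-than-polynomial tails, such that the ground state of $H(\phi)$ is $V(\phi)\psi_{0}$. I would then evaluate the current through a cross-section $\Sigma'$ at distance of order $L$ from $\Sigma$ in this state: on one hand uniformity and Feynman--Hellmann give $\braket{V(\phi)\psi_{0}}{J_{\Sigma'}V(\phi)\psi_{0}}=-\partial_{\phi}E_{0}(\phi)$, on the other the Lieb--Robinson localization of $V(\phi)$ away from $\Sigma'$ gives $V(\phi)^{*}J_{\Sigma'}V(\phi)=J_{\Sigma'}+\error$, so the same expectation equals $\braket{\psi_{0}}{J_{\Sigma'}\psi_{0}}+\error=-\partial_{\phi}E_{0}(0)+\error$. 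Thus $\partial_{\phi}E_{0}(\phi)=\partial_{\phi}E_{0}(0)+\error$ uniformly in $\phi$, and integrating against $\int_{0}^{2\pi}\partial_{\phi}E_{0}\,\dd\phi=E_{0}(2\pi)-E_{0}(0)=0$ forces $\partial_{\phi}E_{0}(0)=\error$; the current through $\Sigma$ therefore vanishes to all orders in $1/L$.

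The decisive difficulty lies entirely in this last step and is topological in origin: the ``charge to one side of $\Sigma$'' is not a legitimate observable on the torus, since any cylinder has two boundary faces, and it is precisely the gap -- through the exponential localization of the ground state's response to flux -- that allows a single cross-section to be treated in isolation. Concretely, I expect the effort to be in constructing the quasi-adiabatic generator and controlling its Lieb--Robinson tails sharply enough that ``localized near $\Sigma$, far from $\Sigma'$'' becomes a bona fide $\error$ estimate, and in ensuring that the relevant gap does not close along the flux path so that $V(\phi)$ exists on all of $[0,2\pi]$. By contrast, the flux formalism, the periodicity, and the extensivity estimate underlying the one-dimensional and gapless bound I expect to be routine.
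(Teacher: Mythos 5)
Your one-dimensional argument is correct and is essentially the paper's own proof in different clothing: the uniform-gauge $H(\phi)$ evaluated at $\phi=\pm 2\pi$ is exactly the twist Hamiltonian $\tilde H_{\pm 2\pi/L}$, your variational bound against the trial state $\psi_0$ combined with $E_0(\pm 2\pi)=E_0(0)$ is the comparison $\langle \Omega_{\pm\varphi},H\Omega_{\pm\varphi}\rangle\geq\langle\Omega,H\Omega\rangle$, and your $\caO(L^{d-2})$ control of the second derivative is the remainder estimate $\Vert R_s\Vert\leq s^2CL$. Your observation that this bound degenerates for $d\geq 2$ matches the paper's estimate $\vert\langle\Omega,J_W\Omega\rangle\vert\leq CW/L$.

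The higher-dimensional, gapped part of your proposal has a genuine gap, and it sits exactly where you relegate the difficulty to ``technical effort'': the assumption that the spectral gap of the localized-gauge $H(\phi)$ stays open uniformly for all $\phi\in[0,2\pi]$. This does not follow from the gap hypothesis at $\phi=0$. The flux-inserted Hamiltonian at a single cut is \emph{not} unitarily equivalent to $H$ for $0<\phi<2\pi$ (conjugation by $\ep{\iu\phi Q_\Gamma}$ twists \emph{both} components of the boundary of the half-torus with opposite signs, which is precisely your own remark about the cylinder having two faces), and its norm distance to $H$ is of order $L^{d-1}$, so no perturbative stability is available. Worse, gap closing under flux insertion is not a pathology to be excluded by sharper estimates but a real physical mechanism: it is the engine of Lieb--Schultz--Mattis-type theorems, and circumventing it is the central technical obstacle in the Hastings--Michalakis proof of Hall quantization. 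Without a uniform gap along the path, the quasi-adiabatic unitary $V(\phi)$, the identity $E_0(2\pi)=E_0(0)$ applied to a smoothly continued ground-state branch, and the Feynman--Hellmann identification all break down. The paper avoids deforming $H$ altogether: it builds the operators $K_\pm$ by applying the filter $\widehat W(-\adjoint_H)$ (with $\widehat W(\omega)=-1/(\iu\omega)$ outside the gap) to $\iu[H,Q]$ at zero flux, splits $K=K_-+K_+$ by the Lieb--Robinson bound, obtains a dressed charge $\overline Q=Q-(K_--K_+)$ with $[\overline Q,P]=0$, and then writes $J=\iu[H,K_-]+\iu[H_-,\overline Q]+\error$, so that $\tr(PJ)\eqL 0$ follows from cyclicity of the trace --- only the gap of $H$ itself is ever used. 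If you want to salvage a flux-threading proof you must either prove gap stability along the path (not true in general under the stated hypotheses) or replace $H(\phi)$ by a quasi-adiabatically dressed family as in Hastings--Michalakis, at which point you have effectively reconstructed the $K_\pm$ machinery.
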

For non-interacting electrons on a line, the theorem is a problem in spectral theory. Let $H$ be the single-particle finite range Hamiltonian. Then  the (single-particle) current $J$ across a fiducial point is 
\begin{equation}\label{First J}
J = \iu[H,Q]
\end{equation}
with $Q$ the charge on the half-line originating at the point. Since $J$ is a trace class operator, the many-body ground state expectation is
\begin{equation*}
\tr(P_F \iu[H,Q]) = \iu \, \tr(P_F [H, \overline Q])
\end{equation*}
where $P_F$ is the Fermi projection, namely the projection onto the eigenspace of $H$ corresponding to energies below the Fermi energy $E_F$, and $\overline{Q} = P_F Q P_F$. If the Fermi energy lies in a gap of $H$, then $\overline Q$ is a projection up to a trace-class perturbation, see~\cite{MBIndex}, and it has a complete set of eigenvectors. In this basis,  all diagonal elements of $P_F [H, \overline Q]$ are zero, and the trace vanishes by Lidskii's theorem.

The many-body charge on a half-line is in general not a well-defined operator. In fact, the description of interacting systems in the infinite volume limit is at best technically involved. As is customary, we shall rather consider a system confined to a box of width $L$ (here with periodic boundary conditions), compute the expectation value of the current, and only then send $L$ to infinity. This finite volume approach has the advantage of providing explicit bounds for convergence rates, which, as we shall see, carry physical information.

Crucially, the current in a finite volume cannot be expressed as a rate of change of charge as in~(\ref{First J}). For a charge operator $Q$ on a segment with two endpoints, the commutator $\iu[H,Q]$ is a sum of two contributions supported in a neighbourhood of these endpoints. The relevant current operator $J$ corresponds to only one of these two contributions, and we show that each one of these two contributions vanishes separately. The above purely spectral theoretic argument cannot be applied and we need techniques to restrict the attention to subregions of the physical space. 

From this point of view, Bloch's theorem is one of the simplest examples of the use of spectral theory in a many-body setting with locality constraints. The toolkit that we are going to use was developed in \cite{LSM, HastingsLesHouches, Sven}.

We will present two arguments for the proof of the theorem. The first one goes back to the original article of Bohm and was sharpened considerably in later works, see in particular~\cite{watanabeBloch}. It shows that the current vanishes as $L^{-1}$ in 1D . The second argument is parallel to that of our recent work~\cite{MBIndex}. It requires a gap condition but shows that the total current vanishes as $L^{-\infty}$
 in any dimension.

\section{The classical proofs: a variational argument}\label{sec:classical}

In this section, we follow \cite{watanabeBloch}. We consider interacting electrons on a discrete ring $\ring=\bbZ/L\bbZ$. The charge operator at site $x$ is given by $q_x = a_x\str a_x$, the charge in an interval is $Q_{[a,b]} = \sum_{x=a}^b q_x$ and the total charge is $Q_\ring = \sum_{x\in \ring} q_x$.

The two essential assumptions on the system's Hamiltonian $H$ are first its {locality} $H = \sum_{x\in\ring} h_{x,x+1}$, and second the {charge conservation}
\begin{equation*}
[H, Q_\ring] = 0.
\end{equation*}
Of course, the nearest neighbour assumption is there for simplicity, the argument below is easily adapted to any finite range Hamiltonian. Together, the two assumptions imply that for any $x\in \ring$, $h_{x,x+1}$ can be chosen to individually satisfy 
\begin{equation}\label{1d Charge conservation}
[h_{x,x+1},Q_\ring] = 0.
\end{equation}
Indeed, suppose that $H = \sum_{x\in \Pi}\tilde h_{x,x+1}$, then by charge conservation $H = \sum_{x\in \ring} h_{x,x+1}$, where
\begin{equation*}
 h_{x,x+1} = \frac{1}{2\pi}\int_0^{2\pi}\ep{\iu \theta Q_\ring} \tilde h_{x,x+1}\ep{-\iu \theta Q_\ring}d\theta
\end{equation*}
is still supported on $\{x,x+1\}$ and for which
\begin{equation*}
[ h_{x,x+1},Q_\ring] = \frac{\iu}{2\pi}\left.\ep{\iu \theta Q_\ring} \tilde h_{x,x+1}\ep{-\iu \theta Q_\ring}\right\vert_{\theta=0}^{\theta=2\pi} = 0
\end{equation*}
since $Q_\ring$ has integer spectrum.

For later use, we write the term $h_{x,x+1}$ as a polynomial, $p_{x,x+1}$, in the local creational and annihilation operators, $h_{x,x+1} = p_{x,x+1}(a_x, a^*_x, a_{x+1}, a^*_{x+1})$. Condition (\ref{1d Charge conservation}) is equivalent to 
\begin{equation}
\label{1.2}
p_{x,x+1}(\ep{\iu \varphi} a_x, \ep{-\iu \varphi}a^*_x, \ep{\iu\varphi} a_{x+1}, \ep{-\iu \varphi} a^*_{x+1}) = p_{x,x+1}(a_x, a^*_x, a_{x+1}, a^*_{x+1}).
\end{equation}
The polynomial can depend on $x$, but all coefficients are assumed to be bounded by a constant $C$. Note that here and below, such constants are always independent of~$L$.

Local gauge transformations are determined by a function $\theta: \ring\to\bbR$, and implemented by the corresponding unitary
\begin{equation*}
U_\theta = \ep{\iu ( \theta,q)},\qquad ( \theta,q)=\sum_{x\in \ring}\theta_x q_x.
\end{equation*}
The gauge transformed $H_\theta = U_\theta\str H U_\theta$ satisfies the relation
\begin{equation}
\label{in_between}
\left(f,\nabla H_\theta\right) = \sum_{x\in \ring} f_x \iu [H_\theta,q_x],
\end{equation}
for any function $f: \ring\to\bbR$. We can write $H_\theta$ explicitly as
\begin{equation}
\label{eq:Hta}
H_\theta = \sum_{x\in \ring} p_{x,x+1} (\ep{\iu\theta_x}a_x, \ep{-\iu\theta_x}a_x\str, \ep{\iu\theta_{x+1}}a_{x+1}, \ep{-\iu\theta_{x+1}}a_{x+1}\str).
\end{equation}

By locality and charge conservation, the operator $\iu[H,Q_{[a,b]}]$ is the difference of two currents, one along the edge $\langle a-1,a\rangle$, the other one along $\langle b,b+1\rangle$:
\begin{equation}\label{1d currents}
\iu[H,Q_{[a,b]}] = J_{\langle a-1,a\rangle} - J_{\langle b,b+1\rangle}.
\end{equation}
Using (\ref{in_between}), we have $J_{\langle a-1,a \rangle} - J_{\langle b,b+1\rangle} = (\chi_{[a,b]},\nabla H_\theta)\vert_{\theta=0} = \partial_s H_{s \chi_{[a,b]}}\vert_{s=0}$, where $\chi_{[a,b]}$ is the characteristic function of the interval $[a,b]$. Comparing with (\ref{eq:Hta}) it follows that 
$$
J_{\langle x-1,x\rangle } = \left.\partial_s p_{x-1,x} (a_{x-1}, a_{x-1}\str, \ep{\iu s}a_{x}, \ep{-\iu s}a_{x}\str) \right\vert_{s=0}.
$$
Accordingly, the current density is given by
\begin{equation}
\label{eq:j}
j = \frac{1}{L}\sum_{x\in \ring} J_{\langle x-1,x\rangle} = \frac{1}{L} \left.\partial_s \tilde{H}_s\right\vert _{s=0},
\end{equation}
where 
$$
\tilde{H}_s = \sum_{x \in \ring} p_{x-1,x} (a_{x-1}, a_{x-1}\str, \ep{\iu s}a_{x}, \ep{-\iu s}a_{x}\str).
$$
In general, the `twist' Hamiltonian $\tilde{H}_s$ is not gauge equivalent to $H$. However, 
$$
\tilde{H}_{\frac{2 \pi}{L}} = H_{\varphi}
$$
where $\varphi$ is the gauge transformation
\begin{equation*}
\varphi_x = 2\pi\frac{x}{L},\quad x\in \ring.
\end{equation*}
Indeed, (\ref{1.2}) implies that 
$$
H_\varphi = \sum_{x \in \Pi} p_{x-1,x} (a_{x-1}, a_{x-1}\str, \ep{\iu (\varphi_x - \varphi_{x-1})}a_{x}, \ep{-\iu (\varphi_x - \varphi_{x-1})}a_{x}\str),
$$
which is equal to $\tilde{H}_{\frac{2 \pi}{L}}$ for the particular choice of $\varphi$ (note that the twisting is correct, in particular, for the edge $\langle L, 1 \rangle $).

Let now $\Omega$ be a (not necessarily unique) ground state of $H$. In this one-dimensional setting of a ring geometry, Bloch's theorem reads: There is constant $C>0$ such that
\begin{equation}\label{1dBloch}
 \vert \langle \Omega,j\Omega\rangle\vert  \leq \frac{C}{L}.
\end{equation}

To prove the claim, we expand $\langle \Omega, (\tilde H_s - H) \Omega\rangle$ to the first order. Using (\ref{eq:j}) we have,
\begin{equation*}
\langle \Omega, (\tilde H_s - H) \Omega\rangle = s L \langle \Omega,j\Omega\rangle  + \langle \Omega,R_s\Omega\rangle,
\end{equation*}
where the rest term is given by
\begin{equation*}
R_s = \frac{s^2}{2} (\left.\partial^2_{s,s}\tilde{H}_s\right\vert_{s=t})
\end{equation*}
for some $t\in(0,s)$. Hence there exists a constant $C$ such that 
\begin{equation}\label{RestEstimate}
\Vert R_s \Vert \leq s^2 C L.
\end{equation}
With this, the energy of the gauge transformed state $\Omega_\varphi = U_{\varphi}\Omega$ can be compared with the ground state energy
\begin{equation*}
0\leq \langle \Omega_\varphi, H\Omega_\varphi\rangle - \langle \Omega, H\Omega\rangle 
= \langle \Omega, (\tilde{H}_\frac{2 \pi}{L}-H)\Omega\rangle 
 =  2\pi \langle \Omega, j\Omega\rangle + \langle \Omega, R_\frac{2 \pi}{L} \Omega\rangle.
\end{equation*}
Now, the argument can be repeated with $\varphi\to-\varphi$, yielding the inequality involving $-j$. Together, one concludes that
\begin{equation*}
-\langle \Omega, R_\frac{2 \pi}{L} \Omega\rangle\vert \leq 2\pi  \langle\Omega, j\Omega\rangle  \leq  \langle \Omega, R_\frac{-2 \pi}{L} \Omega\rangle
\end{equation*}
which yields the claim~(\ref{1dBloch}) by~(\ref{RestEstimate}).

Let us first point to the strength of the argument, namely the very limited assumptions made along the way. It is valid for systems having degenerate ground states. It does not make any assumptions about the spectral gap above the ground state energy.

Another useful aspect of this variational argument built on a unitary is that it extends to thermal equilibrium states. Indeed, the free energy of a density matrix $\rho$ being
\begin{equation*}
F(\rho) = \tr(\rho H) - \beta^{-1} S(\rho),
\end{equation*}
the variation $\rho\mapsto \rho_\varphi=U_\varphi \rho U_\varphi\str $ leaves the entropy constant while the energy difference is given as above. Hence, if $\rho$ is an equilibrium state, namely a minimizer of $F$, we conclude that 
\begin{equation*}
0\leq F(\rho_{\pm\varphi}) - F(\rho) = \pm 2\pi \tr(\rho j) + \tr(\rho M_{\pm\varphi})
\end{equation*}
and the norm estimate~(\ref{RestEstimate}) again implies that $\vert \tr(\rho j)\vert \leq CL^{-1}$ for equilibrium states at finite temperature.

On the other hand, let us consider a quasi one-dimension ring of width $W$, imposing periodic boundary conditions in the transverse direction. Replacing in the discussion above the site $x$ by the full slab $[x]$ of width $W$, we obtain that
\begin{equation}\label{2dFail}
\vert \langle \Omega,J_W\Omega\rangle \vert \leq C\frac{W}{L},
\end{equation}
where $J_W$ is the current density per slab. In particular, this is too weak to prove the vanishing of the current across a full `cut' of a two-dimensional system where $W/L\to r>0$ as $L\to\infty$. This illustrates that the arguments above does not extend to higher dimensions.

We shall show in the following sections how this limitation can be overcome by assuming a spectral gap above the ground state energy, while still keeping a possible (finite) ground state degeneracy. 

\section{Absence of currents in gapped systems}
For simplicity, we phrase the result in the geometric setting of a two-dimensional torus. Let $\Lambda = (\bbZ/L\bbZ)^2$ be the discrete torus, with vertices denoted $x = (x_1,x_2)\in \Lambda$. It is equipped with a metric $d(\cdot,\cdot)$, which we take as the graph distance. The Hilbert space of the system is
\begin{equation*}
\caH = \caF_-(\bbC^{\vert \Lambda \vert}),
\end{equation*}
where $\caF_-(\bbC^{\vert \Lambda\vert})$ is the antisymmetric Fock space of $\vert \Lambda\vert$ degrees of freedom. The observables are even elements of the CAR algebra, namely linear combinations of even monomials in the fermionic creation and annihilation operators. 

We denote $Q_X = \sum_{x\in X} q_x$ the charge in a set $X$, in particular $Q_\Lambda$ is the total charge. Let $H = \sum_{X \subset \Lambda} h_X$ be a local Hamiltonian having finite range, by which we mean that 
\begin{equation*}
h_X = 0\quad \text{whenever}\quad\mathrm{diam}(X)\geq R,
\end{equation*}
and which is charge conserving,
\begin{equation*}
[h_X,Q_\Lambda] =0
\end{equation*}
for all $X$, see~(\ref{1d Charge conservation}). The spectrum of $H$ is assumed to have a gap, namely
\begin{equation}\label{gap}
\sigma(H) \subset \Sigma \cup \Sigma_+
\end{equation}
where $\mathrm{dist} (\Sigma , \Sigma_+) = \gamma>0$ uniformly in $L$. Then $P$ is the spectral projector
\begin{equation*}
P = \chi_{\Sigma}(H)
\end{equation*}
associated with $\Sigma$, which we assume to have a constant rank $p = \mathrm{rk}(P)$ for all $L$ large enough. The case we have in mind is a projection on low laying states, and we call $P$ the ground state projection.

Let $\Gamma = \{x: 0 \leq x_1\leq L/2\}$ be the half-torus, and let $\partial_\pm$ be strips of width $R$ around the boundary of $\Gamma$, i.e. $\partial_- = \{x : |x_1| \leq R \}$. We denote
$
Q = Q_\Gamma
$
the operator of charge in the half-torus. Charge conservation and the fact that $H$ has finite range implies that 
\begin{equation}\label{MB currents}
\iu [H,Q] = \iu[H_-,Q] + \iu[H_+ Q] = J_- - J_+,
\end{equation}
where $H_\pm = \sum_{X \subset \partial_\pm} h_X$ and hence  $J_\pm$ is supported in $\partial_\pm$ respectively. Since the distance between $\partial_-$ and $\partial_+$, is proportional to $L$, it is possible and will be useful
 to consider in $H_\pm$ all terms supported in wider strips $S_\pm$ that are still a distance $L$ apart, but have themselves a width of order $L$. Of course, this does not modify the operators $J_\pm$ at all.

We consider the total current through a fiducial line at $x_1=0$, namely we put $J = J_-$. By $\eqL$ we denote an equality up to $\caO(L^{-\infty})$ terms (in the topology of the norm in operator equations).

\begin{thm}
\label{Gap_Bloch}
In the setting above,
$$
\tr(P J) \eqL 0.
$$
In particular, the average current in the state $P$ vanishes in the large volume limit. 
\end{thm}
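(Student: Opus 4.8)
The plan is to realize the current $J=J_-$ as the response of the energy to an Aharonov--Bohm flux threaded through the single cut at $x_1=0$, and then to exploit that the flux-averaged current vanishes because threading a full flux quantum is a closed loop, while the gap forces the current to be essentially flux-independent. Concretely, I would let $H(\theta)$ be the Hamiltonian obtained from $H$ by multiplying every bond crossing the line $x_1=0$ by the phase $\ep{\iu\theta}$, a modification supported in $\partial_-$. By construction $J_-=\partial_\theta H(\theta)\vert_{\theta=0}$, and since the charges $q_x$ have integer spectrum the phase is trivial at $\theta=2\pi$, so $H(2\pi)=H(0)=H$. Assuming the gap $\gamma$ stays open along the loop $\theta\in[0,2\pi]$ uniformly in $L$, the ground state projection $P(\theta)=\chi_\Sigma(H(\theta))$ is smooth and periodic, $P(2\pi)=P(0)=P$, and the quantity to control is $I(\theta):=\tr(P(\theta)\,\partial_\theta H(\theta))$, with $\tr(PJ)=I(0)$.

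Two exact, finite-dimensional identities then do most of the bookkeeping. First, writing $E(\theta)=\tr(P(\theta)H(\theta))$ one has $\tr(\dot P(\theta)H(\theta))=0$: this follows purely from $P(\theta)^2=P(\theta)$ and $[H(\theta),P(\theta)]=0$, which force the diagonal blocks of $\dot P$ to vanish. Hence $E'(\theta)=I(\theta)$, and integrating around the loop gives
\[
\int_0^{2\pi} I(\theta)\dd\theta = E(2\pi)-E(0)=0 .
\]
Second, current conservation: since $[H(\theta),P(\theta)]=0$ the trace $\tr\!\big(P(\theta)\,\iu[H(\theta),Q]\big)$ vanishes by cyclicity, and because the flux sits only at the cut $x_1=0$ the decomposition (\ref{MB currents}) reads $\iu[H(\theta),Q]=\partial_\theta H(\theta)-J_+$, with the far current $J_+$ supported near $x_1=L/2$ and \emph{independent} of $\theta$. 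Therefore $I(\theta)=\tr(P(\theta)J_+)$ for every $\theta$.

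The crux is to show that $\tr(P(\theta)J_+)$ is almost constant in $\theta$. Here I would invoke quasi-adiabatic continuation (the spectral flow) from the cited toolkit: the gap produces a unitary $V_\theta$ with $P(\theta)=V_\theta\str P V_\theta$ whose generator is quasi-local around the cut $x_1=0$, with tails decaying faster than any power of the distance. Since $J_+$ is supported near $x_1=L/2$, at distance of order $L$ from that cut, a Lieb--Robinson estimate yields $V_\theta J_+ V_\theta\str = J_+ + \error$ in norm, uniformly in $\theta$. Consequently
\[
I(\theta)=\tr(P(\theta)J_+)=\tr(P J_+)+\error ,
\]
so $I(\theta)$ equals a $\theta$-independent constant up to $\error$. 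Combining this with $\int_0^{2\pi}I(\theta)\dd\theta=0$ forces that constant to be $\error$ as well, and therefore $\tr(PJ)=I(0)\eqL 0$, which is the claim.

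The main obstacle is precisely the locality input of the last paragraph: one must establish that the spectral-flow generator associated with the local flux perturbation $\partial_\theta H(\theta)$ is quasi-local with super-polynomially decaying tails, and that the resulting Lieb--Robinson transport of $J_+$ over a distance proportional to $L$ is $\error$, both uniformly in $\theta$ and in $L$. This is where the uniform spectral gap is indispensable, and it also underlies the one delicate standing hypothesis, namely that the gap does not close as the flux is threaded; all of these are furnished by the methods of \cite{LSM, HastingsLesHouches, Sven, MBIndex}.
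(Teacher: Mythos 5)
Your route is genuinely different from the paper's: the paper never deforms the Hamiltonian, but instead constructs (from the single gapped $H$) quasi-local operators $K_\pm$ localized at $\partial_\pm$ such that $\overline Q = Q-(K_--K_+)$ commutes with $P$, writes $J = \iu[H,K_-]+\iu[H_-,\overline Q]+\error$, and kills the trace by cyclicity. Your bookkeeping --- the identity $E'(\theta)=I(\theta)$ coming from $P\dot P P=0$, the vanishing of the loop integral, and the exchange $I(\theta)=\tr(P(\theta)J_+)$ via charge conservation --- is correct as far as it goes (up to a sign convention in the twist), and the locality estimate transporting $J_+$ over a distance of order $L$ would indeed yield an $\error$ error if the spectral flow $V_\theta$ existed.

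However, there is a genuine gap exactly where you flag a ``delicate standing hypothesis'': the assumption that the spectral gap of $H(\theta)$ stays open, uniformly in $L$, for all $\theta\in[0,2\pi]$. This is an additional hypothesis, not implied by the theorem's assumption of a gap for $H=H(0)$, and it is not supplied by the cited toolkit. The perturbation $H(\theta)-H$ is supported on the cut but has norm of order $\theta L$ in two dimensions (and of order $\theta$, hence comparable to $\gamma$ well before $\theta=2\pi$, already in one dimension), so no perturbative argument keeps the gap open; the behaviour of the spectrum under partial flux insertion is precisely the delicate issue at the heart of Lieb--Schultz--Mattis--type theorems and of the Hastings--Michalakis treatment of Hall conductance quantization, where considerable effort is spent specifically to \emph{avoid} assuming a gap along the flux loop. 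Without that assumption, the differentiability and periodicity of $P(\theta)$, the loop identity $E(2\pi)=E(0)$, and the quasi-local generator of $V_\theta$ all break down. The paper's proof sidesteps the problem entirely: the only gap it uses is that of $H$ itself, entering through the definition~(\ref{HastingsGenerator}) of $K$ and the resulting property $[\overline Q,P]=0$.
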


In the proof, we will use operators $K_\pm$, introduced for the present purpose in~\cite{MBIndex}, that encode charge fluctuations in the state $P$ on the boundaries $\partial_\pm$. Specifically, there exist operators $K_\pm$ such that 
\begin{enumerate}
\item $\Vert K_\pm\Vert \leq C L$,
\item $[K_\pm, A_X] = \caO(\dist(X, \partial_\pm)^{-\infty})$, where $\Vert A_X\Vert = 1$ and $\mathrm{supp}(A_X) = X$ with $\vert X\vert \leq C$,
\item $\overline Q := Q - (K_- - K_+)$ leaves the ground state space invariant, namely
\begin{equation}\label{LCF}
[\overline Q,P] = 0.
\end{equation}
\end{enumerate}
Note that (i,ii) imply that $K_\pm$ are supported in $\partial_\pm$, up to tails having a fast decay. Explicitly, let $K$ be defined by
\begin{equation}
\label{HastingsGenerator}
K:= \int_{-\infty}^{+\infty} W(t)   \ep{\iu t H} \iu[H,Q]  \ep{-\iu t H} \,dt = \widehat W(-\mathrm{ad}_H)(\iu\,\mathrm{ad}_H(Q)),
\end{equation}
with $W$ a real-valued, bounded, integrable function satisfying $W(t)=\caO(|t|^{-\infty})$ and $\widehat{W}(\omega)=-\frac{1}{\iu\omega}$ for all $|\omega|\geq \gamma$, with $\gamma$ the spectral gap. These properties imply that $[K,P]=[Q,P]$. By the Lieb-Robinson bound, we conclude that the splitting~(\ref{MB currents}) lifts to $K=K_- + K_+$.

\begin{proof}[Proof of Theorem~\ref{Gap_Bloch}]
By the support properties of $K_\pm$ and $J$, we have 
$$
J = \iu[H,K_-] + \iu[H_-, \overline{Q}] + \caO(L^{-\infty}),
$$
where we used that both $[H_-,K_+]$ and $[(H-H_-),K_-]$ are $\caO(L^{-\infty})$. With~(\ref{LCF}), we conclude that
$$
PJP = \iu[H,PK_-P] + \iu[PH_-P, \overline{Q}] + P\caO(L^{-\infty})P,
$$
and hence 
$$
\tr(PJ) = \caO(L^{-\infty}),
$$
by cyclicity of the trace.
\end{proof}

Let us make a few remarks about the result. First of all, in the present higher dimensional setting, this shows that the total current across the fiducial line $\{x_1 = 0\}$ vanishes in the large volume limit and very fast indeed, namely
\begin{equation*}
\left\vert \langle J \rangle_P \right\vert \leq \frac{C_k}{L^k}
\end{equation*}
for all $k\in\bbN$, where $\langle J \rangle_P = p^{-1} \tr(PJ)$. This should be compared with~(\ref{2dFail}). The cost of this improvement is the additional spectral gap assumption, which we have seen to be a fundamental ingredient of the proof. Secondly, in the case of $p = \mathrm{rk}(P) >1$, the vanishing may in principle be due to cancellations within the ground state space. One additional assumption ensuring that this is not the case is that of local topological order in the ground state space, namely that 
\begin{equation*}
PAP - \langle A\rangle_P P\eqL 0
\end{equation*}
for any local observable. It implies in particular that both $PH_-P$ and $PK_-P$ are proportional to $P$, since both $H_-$ and $K_-$ are sums of local terms. Hence the second line of the proof immediately gives 
\begin{equation*}
PJP \eqL 0.
\end{equation*}

We also point out that the $\caO(L^{-\infty})$ smallness of the current is truly a ground state property so that the above result does indeed not extend to thermal equilibrium states.

Operators $K_\pm$ are used in the above proof as a tool to zoom to one of the boundaries $\partial_\pm$. The technique was introduced in \cite{MBIndex} in the context of many-body index theory. In fact, Bloch's theorem is a consequence of this general theory, a connection we describe in the next section.

\subsection{Connection to a many-body index}  
We briefly recall the definition of the many-body index introduced in~\cite{MBIndex} and generalized to the degenerate case in~\cite{PRBIndex}, we refer to~\cite{TOIndex} for a complete exposition. The theory describes an index associated to a charge transported across a fiducial hyperplane.

Let $U$ be a unitary on $\caH$ that implements transport. We assume that $U$ is generated by a possibly time dependent Hamiltonian $G(s)$ for $s\in[0,1]$, which may not need to be the generator of the physical time evolution. However, $G(s)$ is assumed to be local and charge conserving, namely
\begin{equation*}
G(s) = \sum_{X\subset\Lambda} g_X(s),
\end{equation*}
where $g_X(s)$ is supported in $X$ and
\begin{equation*}
[g_X(s), Q_\Lambda] = 0,
\end{equation*}
 for all $s$. Locality is expressed in terms of the decay of the norm of $g_X(s)$ as  function of the size of $X$, for example by assuming that
\begin{equation*}
\sup_{s\in[0,1]}\sup_{x\in\Lambda}\sum_{X\ni x}\frac{\Vert g_X(s)\Vert}{\xi(\mathrm{diam}(X))} < C
\end{equation*}
uniformly in $L$, where $\xi:[0,\infty)\to (0,\infty)$ is an $L$-independent, rapidly decaying function: $\xi(r) = \caO(r^{-\infty})$. With this, $U = U(1)$ is the solution of the Schr\"odinger equation
\begin{equation*}
\iu \dot U(s) = G(s) U(s),\qquad U(0) = 1.
\end{equation*}
Its adjoint action on the observables satisfies a Lieb-Robinson bound, see for example~\cite{AmandaQL}, and hence
\begin{equation*}
U\str \caA_X U\subset\caA_X
\end{equation*}
for any set $X$, where $\caA_X$ is the set of observables supported in $X$, up to corrections whose norm vanish fast in the distance to $X$. Secondly, $U$ conserves charge in the sense that 
\begin{equation*}
U\str Q_X U - Q_X \in \caA_{\partial X},
\end{equation*}
where $\partial X = \{x: d(x,X)\leq 1\text{ and }d(x,X^c)\leq 1 \}$ is the boundary of $X$. In particular, the operator of net charge transported into the half-torus has the form
\begin{equation}\label{UCC}
U\str Q U - Q \eqL T_- - T_+,\qquad T_\pm\in\caA_{\partial_\pm}
\end{equation}
where $\partial_\pm$ are the two disjoint parts of the boundary of $\Gamma$. This of course is to be related to~(\ref{1d currents}) in the first section. Just as it was there, this specific form follows from the assumed locality and charge conservation, as
\begin{align}
U\str QU-Q &= \iu\int_0^1  U^* (s) [G(s), Q] U(s) \dd s \nonumber \\
& \eqL  \iu\int_0^1  U^* (s) [G_{-}(s) , Q] U(s) \dd s 
+ \iu\int_0^1  U^* (s) [G_{+}(s) , Q] U(s)  \dd s \label{Tminus}
\end{align}
identifies $T_\pm$. Indeed, by charge conservation the local expansion
\begin{equation*}
[G(s), Q] = \sum_{X\subset\Lambda}[g_X(s),Q]
\end{equation*}
asymptotically splits into the two contributions $X\cap \partial_-\neq\emptyset$ and $X\cap \partial_+\neq\emptyset$ where each one belong to $\caA_{\partial_-}$, respectively $\caA_{\partial_+}$, since the sets $X$ of diameter $\mathrm{diam}(X) = o(L)$ (in particular those that span both $\partial_\pm$) have vanishing contributions for large $L$. The Lieb-Robinson bound yields the claim~(\ref{UCC}). 
 
The final hypothesis of the theorem relates $P$ and $U$: the range of $P$ is asymptotically invariant under $U$, namely
\begin{equation*}
[U,P] \eqL 0.
\end{equation*}

\begin{thm}\label{thm: Index}
Under the assumptions above,
\begin{equation*}
\mathrm{dist} (\tr(P T_-), \bbZ) \eqL 0.
\end{equation*}
\end{thm}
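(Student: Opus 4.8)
The plan is to exhibit $\tr(P T_-)$ as a winding number, which is an integer by construction, and then to show that the two differ only by $\error$. Physically this is the statement that the charge pumped across the cut by threading one unit of flux is quantized. The two structural facts that make the argument possible are that $Q = Q_\Gamma$ is a sum of number operators, so that $\ep{2\pi\iu Q} = \id$, and that $P$ has finite, $L$-independent rank $p$, so that all determinants below are finite-dimensional. Concretely, I would introduce the flux loop
\[
Z(\theta) := P\, U\str\, \ep{\iu\theta Q}\, U\, \ep{-\iu\theta Q}\, P, \qquad \theta\in[0,2\pi],
\]
regarded as a family of operators on the $p$-dimensional space $\ran P$. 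Because $\ep{2\pi\iu Q}=\id$ one has $Z(0)=Z(2\pi)=P$, so $\theta\mapsto Z(\theta)$ is a based loop in $\mathrm{Mat}_p(\bbC)$, and provided $Z(\theta)$ is invertible on $\ran P$ for every $\theta$, the quantity
\[
n := \frac{1}{2\pi\iu}\int_0^{2\pi}\tr_{\ran P}\!\big(Z(\theta)^{-1}\,\partial_\theta Z(\theta)\big)\,\dd\theta
\]
is the winding number of $\theta\mapsto\det Z(\theta)$ about the origin, hence $n\in\bbZ$.

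The second step is to identify $n$ with $\tr(P T_-)$ up to $\error$. Differentiating gives $\partial_\theta Z(\theta)=P\,U\str\,\iu[Q,\ep{\iu\theta Q}U\ep{-\iu\theta Q}]\,P$, whose structure at every $\theta$ is governed by the splitting $U\str Q U - Q\eqL T_- - T_+$ from~(\ref{UCC}). As $\theta$ runs over the loop, the winding of $\det Z(\theta)$ measures the net charge pushed across the cut at $\{x_1=0\}$: the far-boundary contribution $T_+\in\caA_{\partial_+}$ is, at every $\theta$, tied to the region $\partial_+$ a distance of order $L$ away, and using $[U,P]\eqL 0$ together with the Lieb--Robinson locality of $U$ it contributes only $\error$ to the integral. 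What survives is the transported charge localized at $\partial_-$, giving $n\eqL\tr(P T_-)$; combined with $n\in\bbZ$ this yields $\dist(\tr(P T_-),\bbZ)\eqL 0$.

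The main obstacle is the invertibility of $Z(\theta)$ along the entire loop, uniformly in $L$: for $\theta$ of order one the flux-threaded vectors $\ep{-\iu\theta Q}U\str\ran P$ are genuinely rotated away from $\ran P$, so a lower bound $\abs{\det Z(\theta)}\geq c>0$ cannot come from perturbation theory in $\theta$. This is exactly where the spectral gap~(\ref{gap}) is indispensable: it prevents the overlap from collapsing as a unit of flux is threaded, in contrast with the gapless one-dimensional situation of Section~\ref{sec:classical}, where only the weaker $L^{-1}$ bound survives. A secondary difficulty is that the identities above hold only up to $\error$, so one must verify that these errors do not accumulate when composed over the continuous family $\{Z(\theta)\}_{\theta\in[0,2\pi]}$; the quantitative Lieb--Robinson estimates of~\cite{MBIndex} are designed for this bookkeeping. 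Finally, when $p=\rk(P)>1$ the relevant winding is that of the determinant of a non-abelian loop in $U(p)$; assuming the local topological order of the remarks following Theorem~\ref{Gap_Bloch}, one may instead reduce to the rank-one case and track a single phase.
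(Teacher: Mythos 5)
Your guiding idea---quantization as the integrality of a winding number attached to threading one unit of flux---is indeed the philosophy behind the index theorem, but the loop you chose does not isolate $T_-$, and the step where you discard $T_+$ is false. Since $[U,P]\eqL 0$, cyclicity of the trace gives $\tr\bigl(P(U\str QU-Q)\bigr)\eqL 0$, and hence, by~(\ref{UCC}), $\tr(PT_+)\eqL\tr(PT_-)$: the far-boundary contribution to your integral is an order-one quantity equal to the very quantity you are trying to compute, not an $\error$ correction. Being supported a distance of order $L$ from $\partial_-$ does not make an observable's ground state expectation small. Consequently the winding number of your full loop $Z(\theta)=P\,U\str\ep{\iu\theta Q}U\ep{-\iu\theta Q}P$ evaluates to $\tr\bigl(P(T_--T_+)\bigr)\eqL 0$, which is trivially close to an integer and carries no information about $\tr(PT_-)$ alone. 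There is also a real problem with the loop itself: with the bare $Q$, the unitary $\ep{-\iu\theta Q}$ does not preserve $\ran P$, and the fluctuations of $Q_\Gamma$ in a gapped ground state grow with the boundary (area law), so $\abs{\det Z(\theta)}$ may degrade rapidly in $L$ at intermediate $\theta$; the gap by itself does not rescue invertibility, so your ``main obstacle'' is not merely technical.

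Both defects are cured by the same device, which is the heart of the argument sketched in the text and carried out in~\cite{MBIndex}: the two boundaries must be separated \emph{multiplicatively}, not additively. One replaces $Q$ by the dressed charge $\overline Q=Q-(K_--K_+)$, for which $[\overline Q,P]=0$ exactly---this, rather than the winding, is where the gap enters, through the construction~(\ref{HastingsGenerator}) of $K_\pm$---and then factorizes $U\str\ep{\iu\phi\overline Q}U\ep{-\iu\phi\overline Q}=Z_-(\phi)Z_+(\phi)$ with $Z_\pm$ localized near $\partial_\pm$. The clustering property~(\ref{clustering}) gives $PZP\eqL PZ_-PZ_+P$, which forces $[Z_-(\phi),P]\eqL 0$ and makes $PZ_-(\phi)P$ nearly unitary on $\ran P$; the quantization of $\tr(PT_-)$ is then read off from the phase of $\det\bigl(PZ_-(2\pi)P\bigr)$ alone. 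In short, your proposal is missing the two ingredients that make the one-boundary statement accessible: the passage from $Q$ to $\overline Q$, and the factorization-plus-clustering argument that replaces your (incorrect) claim that $\partial_+$ contributes only $\error$.
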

In other words, the expected charge transport across the fiducial line (which in the present two-dimensional setting has length $L$) is an integer multiple of $1/p$ for large $L$ to almost exponential precision.

Although we shall not delve into the proof, we point out few basic ideas. First of all, the gap, together with the Lieb-Robinson bound, implies that $P$ satisfies a clustering property,
\begin{equation}\label{clustering}
P ABP - PAPBP = \min\{\vert X\vert,\vert Y \vert\}\caO(d(X,Y)^{-\infty})
\end{equation}
for any $A\in \caA_X,B\in\caA_Y$, see~\cite{HastingsClustering,BrunoClustering}. 

While $[\overline Q,P] = 0$ implies that charge fluctuations between the ground state space and its orthogonal complement vanish, which is in stark contrast with the fluctuations of the charge in the half-space $Q$, the two `charges' have the same expected transport since
\begin{equation*}
\tr(P (U\str \overline Q U - \overline Q)_- ) \eqL \tr(PT_-) - \tr (P(U\str K_- U - K_-)) \eqL \tr(PT_-)
\end{equation*}
by the support property of $K_-$ in the first equality and cyclicity of the trace with $[U,P]\eqL0$ in the second one. The proof proceeds by computing the full counting statistics~\cite{KlichFCS,avron2008fredholm} of $\overline Q$ though the fiducial line $\partial_-$, which is associated with the operator $Z_-(\phi)$ defined by the factorization
\begin{equation*}
U\str \ep{\iu\phi \overline Q} U \ep{-\iu\phi \overline Q} = Z_-(\phi) Z_+(\phi).
\end{equation*}
This equality further allows us to point to the use of clustering in the proof. Indeed, the unitary operator $Z(\phi)$ on left hand side leaves the ground state space invariant by assumption on $U$ and construction of $\overline Q$. But clustering implies that
\begin{equation*}
PZP \eqL PZ_-PZ_+P
\end{equation*}
so that 
\begin{equation*}
1\geq \Vert PZ_-P\Vert \geq \Vert PZP\Vert\eqL 1
\end{equation*}
which proves that $[Z_-(\phi),P]\eqL0$, namely that the ground state space is an invariant space of $Z_-(\phi)$, too.

With the index in hand, Bloch's theorem is an elementary corollary of Theorem~\ref{thm: Index}. For any $t\in[0,1]$, we pick
\begin{equation*}
U = \ep{-\iu t H}
\end{equation*}
which is local, conserves charge and commutes with $P$. Now, the operators on the r.h.s.~of~(\ref{MB currents}) are naturally identified with the currents across the  lines $\partial_\pm$, see~(\ref{1d currents}).

The charge transported in the interval $[0,t]$ is explicit, see~(\ref{Tminus}),
\begin{equation*}
\tr (P T_-(t)) 
=\iu\int_0^t  \tr (P U^* (s) [H_- , Q] U(s)) \dd s
= t\tr (P J)
\end{equation*}
since $U(s) P U(s)\str = P$ for any $s\in[0,1]$. All assumptions of Theorem~\ref{thm: Index} apply, so that
\begin{equation*}
\mathrm{dist} (t \tr (P J),\bbZ) \eqL 0,
\end{equation*}
 and since this is valid for all $t\in[0,1]$, we conclude that
\begin{equation*}
 \langle J\rangle_P \eqL 0,
\end{equation*}
which is Bloch's theorem again.

\section{Currents in mesoscopic rings}

A recurring question associated with Bloch's theorem is its apparent contradiction with the existence of superconducting currents. A short answer is that persistent currents in superconducting rings is a mesoscopic phenomenon. We are, however, not aware of a concrete microscopic model to demonstrate this point. On the other hand, the related persistent currents in mesoscopic metallic rings \cite{Buttiker1983} is modelled by a free Laplacian on a ring pierced by a magnetic flux. In this model, the current can be explicitly calculated. Apart from showing that it indeed vanishes in the large volume limit, the example also illustrates that the gap condition in Theorem~\ref{Gap_Bloch} is necessary: As we will see, the model is gapless and the current is of order $L^{-1}$ instead of $L^{-\infty}$ that would be guaranteed by the theorem, had the model have a gap.

We describe a lattice version of the model \cite{AFG}. A single particle Hamiltonian associated to an electron hopping on a ring $\ring=\bbZ/L\bbZ$ threaded by a flux $\phi \in [0, 2 \pi)$ is
$$
H = - \ep{\iu \phi/L} T - \ep{-\iu \phi/L} T^*,
$$
where
\begin{equation*}
(T \psi)(x) = \psi(x-1),
\end{equation*}
and acts on the Hilbert space $l^2(\ring)$. The normalized eigenstates of $H$ are given by 
$$
\psi_k(x) = \frac{1}{\sqrt L} \ep{2\pi \iu x k/L},\qquad k\in\bbZ/L\bbZ,
$$
with eigenvalues
\begin{equation*}
H \psi_k = -2 \cos \left( \frac{\phi - 2 \pi k}{L} \right) \psi_k.
\end{equation*}
The Hamiltonian does not have any gap in the spectrum that remains open in the large volume limit.

We denote by $\{| x \rangle ,\, x=0, \dots , L-1\}$ the standard position eigenbasis and note that $T = \sum_{x\in\Pi}\vert x+1\rangle\langle x\vert $. The charge in the interval $[a,b]$ is given by $Q_{[a,b]} = \sum_{x=a}^b |x \rangle \langle x|$, namely it is the multiplication operator by the indicator function of the interval $[a,b]$. We have 
$$
\iu[H, Q_{[a,b]}] = J_{\langle a-1,a \rangle} - J_{\langle b,b+1 \rangle},
$$
where
\begin{equation*}
J_{\langle x-1,x \rangle} =  \iu \ep {\iu \phi/L} |x \rangle \langle x-1| - \iu \ep{-\iu \phi/L} | x-1 \rangle \langle x|.
\end{equation*}
The current per edge is 
$$
j = \frac{1}{L} \sum_{x=0}^{L-1} J_{\langle x-1, x \rangle} = \frac{\iu}{L}\ep{\iu \phi/L} T - \frac{\iu}{L} \ep{-\iu \phi/L} T^*.
$$
Note that $j = -\partial_\phi H$, and hence we get
$$
\langle \psi_k , j \psi_k \rangle = \frac{2}{L} \sin \left(  \frac{2 \pi k-\phi}{L} \right).
$$
 By translation invariance, the expectation values of $j$ and $j_{\langle x-1,x \rangle}$ are the same in any eigenstate of $H$.

The ground state of $N$ non-interacting electrons is given by a Fermi projection $P_F$ on the $N$ lowest energy levels of $H$. For $\phi \in (0, \pi)$ and $N = 2m+1$ (the even case being similar), this corresponds to the eigenvectors $\psi_k$ with $k$ in the interval $[-m,m]$. The current expectation value is then 
$$
\tr ( P_F j) = \sum_{k=-m}^m \frac{2}{L} \sin \left( \frac{ 2 \pi k -\phi}{L} \right).
$$
The sum can be explicitly computed, and in the limit $L \to \infty$ with $N/L \to \rho <1$ we get 
$$
\tr ( P_F j) = -\frac{1}{L} \frac{2 \phi}{\pi} \sin(\pi \rho) + \caO(L^{-2}).
$$
We conclude that the current is indeed of order $1/L$. It vanishes in the large volume limit but only polynomially in $L$ showing that the gap assumption in  Theorem~\ref{Gap_Bloch} is necessary. Note that for $\phi \neq 0$, the time-reversal invariance is broken. In the time-reversal invariant situation $\phi = 0$, the current vanishes identically. 

\section*{Acknowledgements}

\noindent The results presented in this note owe very much to our collaboration with W.~De Roeck. We thank H.~Watanabe for an email correspondence related to this work. The work of S.B. was supported by NSERC of Canada. M.F. was supported in part by the NSF under grant DMS-1907435.

\end{document}